\newtheorem{theorem}{Theorem}
\newtheorem{proposition}{Proposition}
\newtheorem{ass}{Assumption}
\newtheorem{definition}{Definition}
\newtheorem{remark}{Remark}
\newcommand\EE {\mathbb E}
\newcommand\FF {\mathbb F}
\newcommand\RR {\mathbb R}
\newcommand\PP {\mathbb P}
\newcommand\bW {\mathbf W}
\def\bone{\mathbf{1}}
\def\tmu{\tilde{\mu}}
\def\bW{\mathbb{W}}
\def\mT{\mathcal{T}}
\def\mX{\mathcal{X}}
\def\qed{\hskip6pt\vrule height6pt width5pt depth1pt}
\def\qed{\hskip 6pt\vrule height6pt width5pt depth1pt}
\newcommand{\ed}{\end{document}}
\newcommand{\be}{\begin{equation}}
\newcommand{\ee}{\end{equation}}
\newcommand{\bq}{\begin{eqnarray}}
\newcommand{\eq}{\end{eqnarray}}
\newcommand{\bx}{\bar{x}}
\begin{document}

\title[Optimal Contract for a Fund Manager]{Optimal Contract for a Fund Manager, with Capital Injections and Endogenous Trading Constraints}
\author{Sergey~Nadtochiy and Thaleia~Zariphopoulou}
\thanks{S. Nadtochiy is partially supported by the NSF grant DMS-1651294.} 
\date{February 22, 2018
}
\maketitle

\begin{abstract}
In this paper, we construct a solution to the optimal contract problem for delegated portfolio management of the fist-best (risk-sharing) type. The novelty of our result is (i) in the robustness of the optimal contract with respect to perturbations of the wealth process (interpreted as capital injections), and (ii) in the more general form of principal's objective function, which is allowed to depend directly on the agent's strategy, as opposed to being a function of the generated wealth only. In particular, the latter feature allows us to incorporate endogenous trading constraints in the contract. We reduce the optimal contract problem to the following inverse problem: for a given portfolio (defined in a feedback form, as a random field), construct a stochastic utility whose optimal portfolio coincides with the given one. We characterize the solution to this problem through a Stochastic Partial Differential Equation (SPDE), prove its well-posedness, and compute the solution explicitly in the Black-Scholes model.
\end{abstract}

\section{Introduction}

Herein, we study a problem of delegated portfolio management, in which an investor hires a fund manager (referred to as the agent) for a specified period of time, to invest her capital dynamically in the available assets. At the end of the time period, the investor receives the wealth generated by the manager and, in return, pays the fees prescribed by the contract. These fees are allowed to depend on the wealth level and on other publicly observed market indicators (e.g. on the prices of available assets). As the investor can deduce the agent's strategy from the generated wealth process and the publicly observed market indicators, the associated optimal contract problems is of the so-called ``first-best" type, also known as the optimal risk-sharing problem.

The existing literature on the optimal contract design for the delegated portfolio management problem, of the first-best type, includes \cite{Starks}, \cite{Stoughton}, \cite{OuYang}, \cite{CadenillasCvitanic}, and the references therein.\footnote{Herein, we limit our literature review to the papers that are dealing with the delegated portfolio management problem specifically, leaving aside the discussion of general optimal contract theory, such as the seminal work \cite{HolmstromMilgrom}.} Single period models are analyzed in \cite{Starks} and \cite{Stoughton}, while \cite{OuYang} considers the Black-Scholes-Merton model, with the investor and the fund manager having either exponential or power utilities. A general market model and general utilities are considered in \cite{CadenillasCvitanic}, which, in particular, constructs an optimal contract explicitly when the market is complete.

The present work differs from the existing results in that, herein, (i) we require that the contract is robust with respect to the perturbations of wealth process, and (ii) we consider a more general optimality criterion for a contract than the classical expected utility of terminal wealth.
Our main motivation to consider the perturbations of wealth process is to include (un-unticipated) \emph{capital injections} made by the investor after the contract is initiated. Namely, we assume that the contract allows the investor (as, e.g., most fee structures of mutual funds do) to add an arbitrary amount of additional capital to her account with the manager, at any time when she wishes to do so, and with the fee structure for the manager remaining the same (i.e. the contract remains the same). Note that these times and amounts, and even their probabilistic structure, may not be initially known to either one of the two parties. However, the inflow of capital in the fund may change the investment strategy of the fund manager drastically (see, e.g. \cite{BasakPavlova}, and the references therein, for more on the effects of capital inflows and outflows on the behavior of a fund manager). Thus, when designing an optimal contract, one needs to take into account the agent's optimal strategy, induced by this contract, for any intermediate time and wealth level. Mathematically, this means that the agent's strategy should be viewed as a random field, defined for all possible initial investment times and wealth levels.

Another special feature of our setting is the more general optimality criterion for a contract. Namely, we assume that the entity designing the contract (referred to as the principal) may be concerned directly with the strategy used by the agent, in addition to the wealth generated by this strategy.\footnote{As explained in the next paragraph, the investor may not coincide with the principal, in our setting.} Our main motivating example of such preference structure is the case of \emph{constrained maximization} of expected utility of terminal wealth, with the constraint that no investment is made in certain assets. In such a case, the principal's objective contains an infinite penalty for investing in the prohibited assets, and the contract must be designed so that the agent follows this rule. For example, a regulator or the board of directors of a mutual fund may want to enforce a ban on investments in certain ``socially irresponsible" assets, or in the assets of companies subject to sanctions (we refer the reader to \cite{SociallyResponsible}, and the reference therein, for more on the so called ``socially responsible" funds). However, the principal cannot put such a rule into a contract directly, as she does not observe the agent's actions. Hence, these constraints need to be enforced implicitly, through the design of the contract, which can only depend on the generated wealth and on the publicly observed factors -- this is what we refer to as the \emph{endogenous constraints}.

Let us describe a specific setting in which the robustness of the contract with respect to capital injections and the endogenous constraints are important (a more detailed formulation is given in Section \ref{se:example}).
First, we assume that the principal, who designs the contract, may not coincide with the investor (at least, they may not coincide for the entire duration of the contract). For example, the fee structure of a mutual fund is very often prescribed a priori, and an individual investor can either take it or leave it.\footnote{The examples of cases where a contract is not fully designed by the party that initiates it are numerous. For example, a lease agreement for a residential property is often standardized, according to the local laws, and it may be rather costly for an individual to design a new contract. In addition, the laws may require that certain conditions are present in the contract or may prohibit certain conditions: e.g. giving the lessee the right to terminate the agreement.} In this case, the principal may be a regulator or the board of directors of the mutual fund.\footnote{Alternatively, the principal may be an initial investor, who enters into a long-term contract with the fund manager and passes on her wealth to the successors. The successors cannot withdraw funds before the deadline, but they may be allowed to add capital, keeping the fee structure as prescribed by the principal.} Even though the principal may not coincide with the investor, we assume that she aims to design the contract so that the investor is satisfied: e.g. the board of directors of a mutual fund wants to keep their investors happy, in order not to lose them to the competitors. At the same time, the principal also wants to ensure that the agent does not invest in the prohibited assets. Thus, the principal finds a strategy that maximizes the investor's expected utility of terminal wealth, subject to the constraint that no investment is made in the prohibited assets, and aims to design a contract (which is only allowed to depend on the generated wealth and on the publicly observable market factors) which would make this strategy optimal for the agent.
This task is complicated by the fact that the agent may perform capital injections, whose times and sizes are unknown (i.e. not modeled) initially. Namely, the investor, unlike the principal, may not be concerned about investing in prohibited assets, hence, she may perform a capital injection even if it encourages the agent to violate this constraint. Thus, the contract has to be chosen by the principal so that the agent has no incentive to violate the constraint even in the presence of capital injections -- this is what we refer to as the \emph{robustness with respect to capital injections}.

On the mathematical side, this paper solves the following inverse problem: given a regular enough random field, find a stochastic utility whose optimal investment strategy, in the feedback form, coincides with this random field. We characterize the solution through a linear stochastic partial differential equation (SPDE), prove its well-posedness, and compute the solution explicitly in the Black-Scholes model.

The rest of the paper is organized as follows. In Section \ref{se:problem.form}, we formulate the optimal contract problem precisely, in mathematical terms. Subsection \ref{subse:mrkt} is concerned with the market model, and Subsection \ref{subse:opt.cont} introduces the notions of admissible and optimal contracts. Section \ref{se:solution} presents a general solution to the problem, which reduces to the inverse problem of constructing an optimization criterion that generates a given optimal strategy (viewed as a random field), for all initial wealth levels. Proposition \ref{prop:FPP} connects this problem to a nonlinear SPDE, and Proposition \ref{prop:SPDE.main} shows how to linearize this SPDE and proves the well-posedness of the resulting equation. Finally, Theorem \ref{thm:main} connects these results to the optimal contract problem. In Section \ref{se:example}, we consider a specific setting in which the proposed notion of optimal contract is natural, and use the general results of preceding sections to construct an optimal contract in closed form, in the Black-Scholes model. Remarkably, the optimal contract constructed in Section \ref{se:example} depends only on the values of the wealth process and of the tradable assets at the terminal time. Hence, it also provides a solution to the second-best (moral hazard) version of the problem, in which the principal only observes the terminal values of the wealth and of the market and, hence, cannot fully deduce the agent's strategy from her observations.

\section{Problem formulation}
\label{se:problem.form}

\subsection{Market model and investment strategies}
\label{subse:mrkt}
We fix a stochastic basis $(\Omega,\FF,\PP)$ and assume that the publicly observed filtration $\mathbb{F}$ (also referred to as the market filtration) is an augmentation of the filtration generated by $W$, a standard Brownian motion in $\RR^d$.
In addition, we assume that the price process of traded assets $S=(S^1,\ldots,S^k)^T$ is an It\^o process in $\RR^k$ with positive entries, given by
\begin{equation}\label{eq.marketModel.1}
d\log S_{t}=\tilde{\mu}_t dt + \sigma^T_t dW_t,
\end{equation}
where the logarithm is taken entry-wise, $\tilde{\mu}$ is a locally integrable stochastic process with values in $\RR^k$, and $\sigma$ is a $d\times k$ matrix of locally square integrable processes, with $d\geq k$, and with linearly independent columns. The latter assumptions is interpreted as the absence of redundant assets. We use the notation "$A^T$" to denote the transpose of a matrix (vector) $A$.
For simplicity, we set the riskless interest rate to zero (equivalently, we work with discounted units).
We introduce the $d$-dimensional stochastic process $\lambda$, frequently called the market price of risk,
via
\begin{equation}\label{lamba.1}
\lambda_t := \left(\sigma^T_t\right)^+ \mu_t ,
\end{equation}
where $(\sigma^T_t)^+$ is the Moore-Penrose pseudo-inverse of the matrix $\sigma^T_t$, and $\mu$ is the drift of $S$: $\mu^i_t=\tmu^i_t + \|\sigma^i_t\|^2/2$, for $i=1,\ldots,k$, with $\sigma_t^i$ being the $i$-th column of $\sigma_t$.
In particular, we have $\sigma^T_t \lambda_t = \mu_t$.
The existence of such a process $\lambda$ follows from the assumption of absence of arbitrage in the model.
Denote by $\mX$ a set of pairs $(\xi,\tau)$, with $\tau\in\mT$ and $\xi\in L^0_+(\mathcal{F}_{\tau})$, where $\mT$ is the set of all $\FF$-stopping times, and $L^0_+(\mathcal{F}_{\tau})$ is the set of all positive $\mathcal{F}_\tau$-measurable random variables.
Starting from any $(\xi,\tau)\in\mX$, the cumulative wealth process $X^{\pi,\xi,\tau}$ is given by
\begin{equation}\label{eq.Xpi.def}
dX_{s}^{\pi,\xi,\tau}= \pi^T_s \sigma^T_s \lambda_s ds + \pi^T_s \sigma^T_s dW_{s},\quad s\in(\tau,T],\quad X^{\pi,\xi,\tau}_{\tau}=\xi,
\end{equation}
for any progressively measurable process $\pi$, representing the self-financing trading strategy, for which the above integrals are well defined.  We assume that $\pi$ is such that $X^{\pi,\xi,\tau}$ is a.s. strictly positive at all times. For each pair $(\xi,\tau)$, we fix a subset of such strategies $\mathcal{A}(\xi,\tau)$, and call any $\pi\in\mathcal{A}(\xi,\tau)$ $(\xi,\tau)$-admissible (or, just admissible, if the rest is clear from the context).

\begin{remark}\label{rem:rem.1}
It is possible to drop the restriction to strictly positive wealth processes. However, in this case, the assumptions on the optimal strategy $\pi^*$ and on the initial condition $U_0$, made in Subsection \ref{subse:solve.SPDE}, as well as the proof of Proposition \ref{prop:SPDE.main}, would change accordingly (cf. Remark \ref{rem:pos.wealth.rem2}).
\end{remark}

\subsection{Optimal contract}
\label{subse:opt.cont}

Consider an investor who hires an agent in order to invest her initial capital $X_0>0$ in the market described above.
The agent is offered a contract, which is represented by a measurable mapping $C:\Omega\times(0,\infty)\rightarrow\RR$, which maps the terminal value of a wealth process (produced by the agent, via a chosen trading strategy $\pi$) into the payment (received by the agent at time $T$). The agent is risk-neutral, in that he aims to maximize his expected objective:
\begin{equation}\label{eq.OptCont.agent}
\max \EE \,C(X^{\pi}_T),
\end{equation}
where the maximization is performed over all admissible strategies $\pi$, with $C$ being fixed. The agent will not enter into a contract if his expected payment does not reach a given level $u_0>0$.
We define an admissible contract as a contract for which the agent's optimization problem is well posed, and such that the participation constraint is satisfied.
\begin{definition}\label{def:AdmisCont}
We call $C$ an admissible contract if the following holds.
\begin{itemize}
\item For any $(\xi,\tau)\in\mX$ and any $\pi\in\mathcal{A}(\xi,\tau)$, $C(X^{\pi,\xi,\tau}_T)$ is absolutely integrable.
\item There exists a progressively measurable random field $\pi^*:[0,T]\times\Omega\times (0,\infty)\rightarrow\RR$, s.t.:
\begin{itemize}
\item for any $(\xi,\tau)\in\mX$, there exists a unique $X^{*,\xi,\tau}$ satisfying (\ref{eq.Xpi.def}), with $\pi=\pi^*(X^{*,\xi,\tau})$,
\item for any $(\xi,\tau)\in\mX$, $\pi^*(X^{*,\xi,\tau})\in\mathcal{A}(\xi,\tau)$,
\item $\EE\, C\left(X^{*,X_0,0}_T\right)\geq u_0$,
\item for any $(\xi,\tau)\in\mX$ and any $\pi\in\mathcal{A}(\xi,\tau)$,
$$
\EE \left(C(X^{\pi,\xi,\tau}_T)\mid\mathcal{F}_\tau \right)\leq \EE \left(C(X^{*,\xi,\tau}_T)\,\mid\,\mathcal{F}_\tau \right),\quad a.s.,
$$
and the equality is only possible if $\pi=\pi^*(X^{*,\xi,\tau})$ for a.e. $(t,\omega)$ in the stochastic interval $[\tau,T]$.
\end{itemize}
Any such strategy $\pi^*$ is called $C$-optimal.
\end{itemize}
\end{definition}

The special feature of the above definition, which differentiates it from the classical setup, is that the agent is allowed to re-evaluate his strategy at intermediate times, and starting from various wealth levels, which, in particular, may not coincide with the wealth generated by his strategy thus far. In addition, at each re-evaluation, the agent has to follow the exact strategy prescribed by the optimal random field: i.e. the optimal strategy is time-consistent and unique. A motivation for such strong definition of an optimal contract is given in the discussion following Definition \ref{def:OptCont}, and a specific problem is described in Section \ref{se:example}.

The contract is designed by a principal who aims to maximize the expectation of her individual objective $J$, which maps any progressively measurable random field $\pi:[0,T]\times\Omega\times (0,\infty)\rightarrow\RR$ into an $\mathcal{F}_T$-measurable random variable $J(\pi)$, applied to the strategy used by the agent, less the payment to the agent:
\begin{equation}\label{eq.OptCont.principal}
\max \EE \left[J(\pi) - C\left(X^{\pi}_T\right)\right].
\end{equation}
The above maximization is performed over all admissible contracts $C$, with the strategy $\pi$ being $C$-optimal.

\begin{definition}\label{def:OptCont}
An admissible contract $C^*$ is a solution to the optimal contract problem (\ref{eq.OptCont.agent})--(\ref{eq.OptCont.principal}), also referred to as an optimal contract, if, for any $C^*$-optimal strategy $\pi^*$, any admissible contract $C$, and any $C$-optimal $\pi$, we have
$$
\EE \left(J(\pi) - C\left(X^{\pi,X_0,0}_T\right)\right) \leq \EE \left(J(\pi^*) - C^*\left(X^{*,X_0,0}_T\right)\right),
$$
where $X^{\pi,X_0,0}$ and $X^{*,X_0,0}$ are the wealth processes associated with $\pi$ and $\pi^*$, respectively, and with the initial condition $X_0$ at time zero.
\end{definition}

The main difference between the above formulation of the optimal contract problem and the classical one is that, in the present case, the principal needs to predict the agent's strategy for various initial wealth levels, which may not correspond to the levels generated by the strategy itself. The reason for such a formulation is explained in the introduction: on the one hand, we want to allow for (positive) capital injections after the contract is initiated, on the other hand, we do not want to impose any probabilistic structure on the times or the sizes of these injections. In such a robust formulation, the capital under management may change (increase) in an ``unpredictable way" at any given time, which, naturally, forces the agent to change his strategy. However, Definition \ref{def:AdmisCont} ensures that, even if an injection is made, the agent's optimal strategy is still given by the same random field (only started from a different wealth level). Thus, in the presence of unknown capital injections, the contract can only determine the agent's optimal strategy as a random field. This makes it natural to define the principal's objective as a function of such random field. A specific example that leads to an optimal contract problem of the present type is described in Section \ref{se:example}.

It is worth mentioning that, in the classical formulation of the problem, if we assume no capital injections and view strategies as stochastic processes, with a fixed initial wealth, the optimal contract problem typically reduces to the so-called ``first best" type, which has a trivial solution. This is due to the fact that, in a non-degenerate market, one can infer the trading strategy from a terminal value of the wealth process (viewed as a random variable). An example of such trivial construction is given in Subsection \ref{subse:fake.opt.cont}. However, the mapping from wealth to strategy (viewed as a stochastic process) depends on the initial capital, hence, the resulting, trivial, solution is not robust w.r.t. capital injections. The optimal contract defined above (with an example constructed in Subsection \ref{subse:ex.opt.cont}) is robust w.r.t. such injections, and it is also optimal in the classical formulation. Thus, in particular, it provides another, non-trivial, solution to the classical problem.

It is also important that $J(\pi)$ may depend on $\pi$ in a more general way -- not only through $X^{\pi}$. Otherwise, the problem becomes trivial in many cases of interest, as illustrated in Subsection \ref{subse:fake.opt.cont}. As discussed in the introduction, our main motivation for considering general dependence on $\pi$ is the presence of endogenous constraints. Namely, we assume that the principal does not want the agent to invest in certain stocks but cannot simply include it in the contract, as the agent's strategy is not directly observable.


\begin{remark}
Note that we allow the principal's individual objective, $J$, and the contract, $C$, to be quite general. However, the principal's total objective combines them in the additive way: $J-C$. From an economic point of view, it may be more natural to include the agent's fees inside $J$, but it is not allowed by the current setting. Nevertheless, the subsequent sections show that the optimal contract is constructed as $C(x) = \overline{C}_T(x)$, where $\overline{C}$ is a sufficiently smooth random filed, so that we can define
$$
C(X^{\pi}_T) = \overline{C}_0(X_0) + \int_0^T d\, \overline{C}_t(X^{\pi}_t).
$$
As we assume no discounting (equivalently, we work with discounted units), the above representation can be interpreted as a flow of payments from the principal to the agent. As these payments are spread over the entire time interval $[0,T]$, it is possible to justify their appearance in the additive form in the principal's objective.
\end{remark}

\begin{remark}
The assumption of risk-neutrality of the agent can be relaxed by assuming that he maximizes the expected utility of his fees, $U(C)$. However, in such a case, we would either have to replace $C$ in the principal's objective by $U(C)$, or the agent's participation constraint would have to be formulated in terms of expected fees (as opposed to expected utility of his fees), none of which is very natural. In addition, we do not allow for a cost of effort in the agent's objective.
These are the limitations we have to accept in order to be able to use our solution approach. We leave the case of more general preferences and cost structures for future research.
\end{remark}

\begin{remark}
The optimal contract constructed herein is also robust w.r.t. maturity. Namely, our method allows one to construct an entire family of optimal contracts, $\{C_T\}$, for all maturities $T>0$. Thus, we also solve a slightly more general optimal contract problem (of the so-called ``third best" type), in which the agent is allowed to choose the time horizon (when the contract is initiated), and the principal does not know which horizon the agent prefers, hence, she offers him a menu of contracts, for all possible horizons.
\end{remark}

\begin{remark}
A very desirable feature of a contract is its limited liability: i.e. the condition $C\geq0$. Note that we do not require limited liability in the definition of admissible contract, and our general results do not guarantee that this property is satisfied by the optimal contract. However, the optimal contract constructed in Section \ref{se:example} does satisfy the limited liability condition.
\end{remark}


\section{Solution}
\label{se:solution}

Let us outline, heuristically, the solution approach.
First, we notice that, if $C$ is an admissible contract and $\pi^*$ is $C$-optimal, with the associated optimal wealth $X^*$, the contract
\begin{equation}\label{eq.C.norm}
\tilde{C}:=C\frac{u_0}{\EE\, C(X^{*}_T)}
\end{equation}
is also admissible, and the set of $\tilde{C}$-optimal strategies is the same as the set of $C$-optimal strategies. In addition,
$$
\EE \,\tilde{C}(X^{*}_T)=u_0.
$$
Thus, there is no loss of optimality in restricting the candidate contracts $C$ to those admissible contracts for which $\EE C(X^{\pi})=u_0$, for every $C$-optimal $\pi$.
This implies that we can drop the expected payment to the agent in the principal's objective and solve the relaxed problem: find a random field $\pi^*$ and the associated optimal wealth $X^*$ (with initial condition $(X_0,0)$), s.t.
$$
\pi^*(X^*)\in\text{argmax}\, \EE \,J(\pi),
$$
where the maximization is performed over all $\pi\in\mathcal{A}(X_0,0)$. The main idea is to construct, for a given $\pi^*$, an admissible contract $C$, s.t. $\pi^*$ is the only $C$-optimal strategy.
Normalizing $C$ as in (\ref{eq.C.norm}), we obtain the desired optimal contract.

Thus, the construction of an optimal contract reduces to solving the following \emph{inverse problem}: given a strategy $\pi^*$ (viewed as a random field), find an admissible contract $C$, s.t., for any $(\xi,\tau)\in\mX$ and any $\pi\in\mathcal{A}(\xi,\tau)$,
$$
\EE \left(C(X^{\pi,\xi,\tau}_T)\mid\mathcal{F}_\tau \right)\leq \EE \left(C(X^{*,\xi,\tau}_T)\,\mid\,\mathcal{F}_\tau \right),\quad a.s.,
$$
and the equality is only possible if $\pi=\pi^*(X^{*,\xi,\tau})$ for a.e. $(t,\omega)$ in the stochastic interval $[\tau,T]$. Fortunately, a solution to such problem is offered by the so-called forward performance SPDE. In the remainder of this section, we describe this solution, given by a random field $(U_t(x))_{t\geq0,\,x>0}$, and show that
$$
C(x) = u_0\frac{U_T(x)}{U_0(X_0)},
$$
is the desired optimal contract.

\subsection{Forward performance SPDE}

Recall that the value function in the classical utility maximization problem, at least formally, solves the Hamilton-Jacobi-Bellman (HJB) equation. The following SPDE is an analog of the HJB equation in a non-Markovian case:
\begin{equation}\label{SPDE}
dU_t(x) = \frac{1}{2}\frac{\|\partial_xU_t(x)\lambda_t + (\sigma^T_t)^+ \sigma^T_t \partial_x a_t(x)\|^2}{\partial^2_{xx}U_t(x)}dt + a^T_t(x) dW_t,\quad t\in[0,T],\,x>0,
\end{equation}
where $a_t(x)$ is a $d$-dimensional vector of progressively measurable random functions, continuously differentiable in $x$, which is referred to as the volatility of the forward performance process.
An application of It\^o-Ventzel formula proves the following fact (cf. \cite{mz-spde}, \cite{zar-RICAM}, \cite{ElKaroui}, \cite{ElKaroui.2}). 

\begin{proposition}\label{prop:FPP}
Assume that $a=(a_t(x))_{t\in[0,T],\,x>0}$ and $U=(U_t(x))_{t\in[0,T],\,x>0}$, respectively, are once and twice continuously differentiable stochastic flows (in the sense of \cite{Kunita}), satisfying (\ref{SPDE}), and such that $U$ is strictly concave in $x$ (a.s. for all times). Then, the following holds. 
\begin{enumerate}
\item For any $(\xi,\tau)\in\mX$ and any $\pi\in\mathcal{A}(\xi,\tau)$, the process $\left(U_t\left( X_{t}^{\pi,\xi,\tau}\right)\right)_{t\in[\tau,T]}$ is a local supermartingale (in the sense that there exists a localizing sequence that makes it a supermartingale).\footnote{Throughout the paper, such process is always defined w.r.t. the filtration $(\mathcal{F}_{\tau\vee t})_{t\in[0,T]}$, and its value on $[0,\tau]$ is $U_{\tau}(\xi)$.}
\item Assume that there exists a progressively measurable random field $\pi^*$, satisfying a.s., for all $t\in[0,T]$,
\begin{equation} \label{portfolio-sde}
\sigma_t \pi _{t}^*(x) = -\frac{\lambda_t \partial_x U_{t}(x) + (\sigma^T_t)^+ \sigma^T_t \partial_x a_{t}(x) }{\partial^2_{xx} U_{t}(x) },
\,\,\,\,\,\,\,\,\,\,\,\,\forall x>0,
\end{equation}
and such that, for any initial condition $(\xi,\tau)\in\mX$, there exists a unique (strong) solution $X^{*,\xi,\tau}$ to
\begin{equation}\label{wealth-sde}
dX_{t}^{*,\xi,\tau} = \left(\sigma_t\pi^*_t(X^{*,\xi,\tau}_t)\right)^T \lambda_t dt + \left(\sigma_t \pi^*_t(X^{*,\xi,\tau}_t)\right)^T dW_{t},
\quad t\in[\tau,T],
\quad X^{*,\xi,\tau}_{\tau} = \xi.
\end{equation}
Then, $\left(U_t\left( X_{t}^{*,\xi,\tau}\right)\right)_{t\in[\tau,T]}$ is a local martingale.
\item Assume that the conditions of the previous two items are satisfied, and that, in addition, the aforementioned local martingale and local supermartingales are a true martingale and true supermartingales, respectively. Then, for any $(\xi,\tau)\in\mX$ and any $\pi\in\mathcal{A}(\xi,\tau)$,
$$
\EE \left(U_T(X^{*,\xi,\tau}_T)\mid \mathcal{F}_{\tau} \right) \geq \EE \left(U_T(X^{\pi,\xi,\tau}_T)\mid \mathcal{F}_{\tau} \right)\quad a.s.,
$$
and the equality is only possible if $\pi=\pi^*(X^{*,\xi,\tau})$ for a.e. $(t,\omega)$ in the stochastic interval $[\tau,T]$. 
\end{enumerate}
\end{proposition}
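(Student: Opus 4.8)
The plan is to apply the It\^o--Ventzel (Kunita) formula to the composition $U_t(X^{\pi,\xi,\tau}_t)$, read off its drift, and prove that this drift is pointwise non-positive, vanishing precisely along the feedback law (\ref{portfolio-sde}). Items (1)--(3) will then follow by localization and routine (super)martingale bookkeeping.

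First I would invoke the It\^o--Ventzel formula for stochastic flows, which is legitimate under the assumed regularity of $U$ and $a$ (cf. \cite{Kunita}). Writing the SPDE (\ref{SPDE}) as $dU_t(x)=g_t(x)\,dt+a^T_t(x)\,dW_t$ and using that $X^{\pi,\xi,\tau}$ has diffusion coefficient $\sigma_t\pi_t$ and drift $(\sigma_t\pi_t)^T\lambda_t$ (see (\ref{eq.Xpi.def})), the formula gives, on $[\tau,T]$,
\[
dU_t\big(X^{\pi,\xi,\tau}_t\big)=D_t\,dt+\big(a_t(X^{\pi,\xi,\tau}_t)+\partial_xU_t(X^{\pi,\xi,\tau}_t)\,\sigma_t\pi_t\big)^T dW_t,
\]
where, suppressing the arguments $(t,X^{\pi,\xi,\tau}_t)$, the drift is
\[
D_t=g_t+\partial_xU_t\,(\sigma_t\pi_t)^T\lambda_t+\tfrac12\,\partial^2_{xx}U_t\,\|\sigma_t\pi_t\|^2+\big(\partial_xa_t\big)^T\sigma_t\pi_t,
\]
the last summand being the It\^o--Ventzel cross term.

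The heart of the argument is the pointwise inequality $D_t\le 0$. Fixing $(t,\omega)$, I would set $p:=\sigma_t\pi_t$ (which lies in the column space of $\sigma_t$), write $U':=\partial_xU_t$ and $U'':=\partial^2_{xx}U_t<0$, and let $\Pi_t:=(\sigma^T_t)^+\sigma^T_t$ be the orthogonal projection onto that column space. Since $\lambda_t=(\sigma^T_t)^+\mu_t$ also lies in the column space, and since $\Pi_t$ is symmetric and fixes $p$, the cross term equals $p^T\bar a'$ with $\bar a':=\Pi_t\,\partial_xa_t$, and $(\sigma_t\pi_t)^T\lambda_t=p^T\lambda_t$. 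Because $U''<0$ and $v:=U'\lambda_t+\bar a'$ lies in the column space (so the unconstrained minimizer of $p\mapsto\tfrac12U''\|p\|^2+p^Tv$ is an admissible value of $\sigma_t\pi_t$), completing the square gives
\[
D_t=g_t+\tfrac12U''\,\big\|\,p+\tfrac{1}{U''}v\,\big\|^2-\frac{\|v\|^2}{2U''}=\tfrac12\,\partial^2_{xx}U_t(X^{\pi,\xi,\tau}_t)\,\big\|\sigma_t\pi_t-\sigma_t\pi^*_t(X^{\pi,\xi,\tau}_t)\big\|^2\le0,
\]
where $g_t=\tfrac12\|v\|^2/U''$ cancels the $-\|v\|^2/(2U'')$ term, and where the identification of $p+v/U''$ with $\sigma_t\big(\pi_t-\pi^*_t(X^{\pi,\xi,\tau}_t)\big)$ is exactly (\ref{portfolio-sde}). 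Since the columns of $\sigma_t$ are linearly independent, equality holds iff $\pi_t=\pi^*_t(X^{\pi,\xi,\tau}_t)$. Item (1) then follows by localizing the stochastic integral (e.g.\ along $\theta_n:=\inf\{t\ge\tau:X^{\pi,\xi,\tau}_t\notin(1/n,n)\text{ or the quadratic variation of the martingale part exceeds }n\}\wedge T$): up to the localizing sequence, $U_{\cdot}(X^{\pi,\xi,\tau}_{\cdot})$ is a true martingale plus a non-increasing process, hence a supermartingale. For item (2), substituting $\pi=\pi^*(X^{*,\xi,\tau})$ makes $D_t\equiv0$, so $U_{\cdot}(X^{*,\xi,\tau}_{\cdot})$ is a local martingale. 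For item (3), assuming the true (super)martingale property, I would condition on $\mathcal F_\tau$ at time $T$: the martingale property gives $\EE\big(U_T(X^{*,\xi,\tau}_T)\mid\mathcal F_\tau\big)=U_\tau(\xi)$ and the supermartingale property gives $\EE\big(U_T(X^{\pi,\xi,\tau}_T)\mid\mathcal F_\tau\big)\le U_\tau(\xi)$; subtracting yields the claimed inequality. If equality holds, the supermartingale $U_{\cdot}(X^{\pi,\xi,\tau}_{\cdot})$ has equal $\mathcal F_\tau$-conditional expectations at $\tau$ and $T$, so it must be a martingale on $[\tau,T]$; being a continuous semimartingale with decomposition ``non-increasing drift $\int_\tau^\cdot D_s\,ds$ plus continuous local martingale'', uniqueness of the decomposition forces $\int_\tau^tD_s\,ds\equiv0$, hence $D_s=0$ for a.e.\ $(s,\omega)$. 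By the equality clause above, $\pi_s=\pi^*_s(X^{\pi,\xi,\tau}_s)$ a.e.\ on $[\tau,T]$, so $X^{\pi,\xi,\tau}$ solves (\ref{wealth-sde}) with initial condition $(\xi,\tau)$; the uniqueness postulated in item (2) then gives $X^{\pi,\xi,\tau}=X^{*,\xi,\tau}$, and therefore $\pi=\pi^*(X^{*,\xi,\tau})$ a.e.

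The step I expect to require the most care is the constrained completion of the square: one must verify that $\lambda_t$, the projected gradient $(\sigma^T_t)^+\sigma^T_t\,\partial_xa_t$, and hence $v=U'\lambda_t+\bar a'$ all lie in the column space of $\sigma_t$, so that the Moore--Penrose projection appearing in (\ref{SPDE}) is precisely what renders $D_t$ a perfect (negative) square over the admissible set $\{\sigma_t\pi:\pi\in\RR^k\}$; a secondary point is the bookkeeping that upgrades ``equality $\Rightarrow$ true martingale'' to ``$D\equiv0$'' and then, via the uniqueness hypothesis, to ``$\pi=\pi^*$''. The analytic prerequisites for It\^o--Ventzel are granted by the hypotheses, so they will not be an obstacle here.
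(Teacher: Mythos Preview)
Your proposal is correct and follows exactly the approach the paper indicates: apply the It\^o--Ventzel formula to $U_t(X^{\pi,\xi,\tau}_t)$, complete the square in the drift to see it is non-positive and vanishes precisely when $\pi_t=\pi^*_t(X^{\pi,\xi,\tau}_t)$, and read off the (super)martingale properties. You simply spell out in full the computation and the equality-case argument that the paper compresses into ``a direct computation'' and ``the drift is strictly negative unless $\pi_t=\pi^*(X^{\pi,\xi,\tau}_t)$''; the one cosmetic slip is that the critical point of $p\mapsto \tfrac12 U''\|p\|^2+p^Tv$ is a \emph{maximizer} (since $U''<0$), not a minimizer, but this does not affect the completion of the square.
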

\begin{proof}
As mentioned above, the proof of the theorem follows easily from an application of It\^o-Ventzel formula to $U_t(X^{\pi,\xi,\tau}_t)$. A direct computation verifies the first two claims. For the last claim, we only need to notice that the drift of $U_t(X^{\pi,\xi,\tau}_t)$ is strictly negative unless $\pi_t=\pi^*(X^{\pi,\xi,\tau}_t)$, with $\pi^*$ given by (\ref{portfolio-sde}). Then, taking conditional expectations, we obtain the desired inequality.

\qed
\end{proof}


The last item of the above theorem implies that $\pi^*(X^*)$ maximizes the criterion $\EE U_T(X^{\pi}_T)$ over all admissible strategies, provided it is, itself, admissible. Of course, to establish this, one needs to (i) solve the SPDE (\ref{SPDE}), (ii) ensure the existence of $\pi^*$ and $X^*$, and (iii) drop ``local" in the supermartignale and martingale properties.
One way to ensure that the local supermartingale $\left(U_t\left( X_{t}^{\pi}\right)\right)_{t\geq0}$ is a true supermartingale, is to construct $U$ so that $\inf_{t,x}U_t(x)$ is bounded from below by an absolutely integrable random variable, and to restrict the initial wealth to absolutely integrable random variables. Then, one can also show by a standard argument that the local martingale $\left(U_t\left( X_{t}^{*}\right)\right)_{t\geq0}$ is a true martingale if and only if its expectation at any time coincides with its initial value. Of course, there also exist other ways to address (iii).

To address (i) and (ii), one needs to solve (\ref{SPDE}). However, the latter equation presents numerous difficulties associated with its nonlinear nature and, even more importantly, with the fact that it has ``time running in a wrong direction" (cf. \cite{NadtochiyTehranchi}, for a more detailed discussion of the latter issue). To date, there exist no existence or uniqueness results for the solutions to (\ref{SPDE}) in its general form. Nevertheless, in the next subsection, we choose a specific form of the volatility process $a$ and show how to construct a unique solution to (\ref{SPDE}), for any given (sufficiently regular) strategy $\pi^*$, given as a random field. If, in addition, (iii) is resolved and $\pi^*(X^*)$ is admissible, we obtain a solution to the optimal contract problem formulated in Subsection \ref{subse:opt.cont}. Indeed, if $\pi^*$ is the optimal strategy of the principal (i.e. the strategy she would like the agent to follow), the associated $U_T(x)$, normalized appropriately, produces the desired optimal contract.

\subsection{Solving the forward performance SPDE}
\label{subse:solve.SPDE}

Assume that we are given a random field
\begin{equation*}
\pi^*: \left(\RR_+\times\Omega\times(0,\infty), \mathcal{P} \otimes \mathcal{B}\left((0,\infty)\right)\right)
\rightarrow \left(\RR, \mathcal{B}\left(\RR\right)\right),
\end{equation*}
where $\mathcal{P}$ is the sigma-algebra of progressively measurable sets.
As usual, we suppress the dependence upon $\omega\in\Omega$.
We assume that $\pi^*$ is a sufficiently smooth random field, with the precise assumptions stated below.
In this subsection, we construct a solution to (\ref{SPDE}), such that (\ref{portfolio-sde}) holds with the given $\pi^*$.

Assume that $U$ solves (\ref{SPDE}) and 
\begin{equation}\label{eq.a.def.new}
a_t(x)= a(t,x, U_t, \partial^2_{xx} U_t) := a_t(\bx) - \lambda_t \left(U_t(x)- U_t(\bx)\right) - \int_{\bx}^x \sigma_t \pi^*_t(y)\partial^2_{yy}U_t(y) dy,
\end{equation}
where $\bar{x}>0$ is a fixed constant, and $\left(a_t(\bx)\right)_{t\geq0}$ is an arbitrary locally square integrable process in $\RR^d$.
With such a choice, we have:
\begin{equation}\label{eq.a.def.new.1}
\partial_x a_t(x)=  -\sigma_t\pi^*_t(x)\partial^2_{xx}U_t(x) - \partial_xU_t(x)\lambda_t.
\end{equation}
Then, recalling that the columns of $\sigma_t$ are linearly independent, we obtain
\begin{equation}\label{eq.a.def.new.2}
\partial_xU_t(x)\lambda_t + (\sigma^T_t)^+ \sigma^T_t \partial_x a_t(x)=  -\sigma_t\pi^*_t(x)\partial^2_{xx}U_t(x),
\end{equation}
and (\ref{SPDE}) becomes
\begin{eqnarray}
&&\label{eq.linSPDE.new.1} d U_t(x) = \frac{1}{2} \|\sigma_t\pi^*_t(x)\|^2 \partial^2_{xx}U_t(x)  dt\\
&&+ \left(a_t(\bx) - \lambda_t \left(U_t(x)- U_t(\bx)\right) - \int_{\bx}^x \sigma_t \pi^*_t(y)\partial^2_{yy}U_t(y) dy\right)^TdW_t\nonumber
\end{eqnarray}
The following derivations (until Assumption \ref{ass:ass.1}) are heuristic and are meant to motivate the main result of this subsection, Proposition \ref{prop:SPDE.main}.
Introducing $V_t(x):=\partial_x U_t(x)$, we, formally, differentiate the above equation, to obtain
\begin{eqnarray}
&& \label{eq.V.SPDE.new} dV_t(x) = \frac{1}{2} \partial_x\left(\|\sigma_t\pi^*_t(x)\|^2 \partial_{x}V_t(x)\right) dt
- \left( \sigma_t \pi^*_t(x) \partial_x V_t(x) + \lambda_t V_t(x)\right)^T dW_t.
\end{eqnarray}
Next, we introduce $R_t(x):=-\partial_x V_t(x)=-\partial^2_{xx} U_t(x)$, and, formally, differentiate the above equation, to obtain
\begin{eqnarray}
&&d R_t(x) = \frac{1}{2}\left[ \partial_x\left(\|\sigma_t\pi^*_t(x)\|^2 \partial_{x}R_t(x)\right)
+ \partial_x\left(\|\sigma_t\pi^*_t(x)\|^2 \right) \partial_xR_t(x)\right.\nonumber\\
&& \label{eq.R.SPDE.new} 
\left.+ \partial^2_{xx}\left(\|\sigma_t\pi^*_t(x)\|^2 \right) R_t(x)\right] dt
- \left[ \sigma_t\pi^*_t(x) \partial_x R_t(x) + \left(\lambda_t + \sigma_t\partial_x\pi^*_t(x)\right) R_t(x)\right]^T dW_t,
\end{eqnarray}
with the deterministic initial condition $R_0(x) = -\partial^2_{xx} U_0 (x)$.

\begin{ass}\label{ass:ass.1}
Assume that, almost surely, for each $t\geq0$, the function $\pi^*_t\left(\cdot\right)$ is five times continuously differentiable and
\begin{equation*}
\sup_{z\in\RR} \left| \sum_{j=1}^k \sigma^{ij}_t(\partial_z)^m \left(e^{-z}\pi^{*j}_t\left(e^z\right)\right)\right| \leq \xi_t,\,\,\,\,\,\,\forall m=0,\ldots,5,\,\,i=1,\ldots,d,
\end{equation*}
for some progressively measurable stochastic process $\xi$ with locally bounded paths.
\end{ass}

For any function $\phi:\RR\rightarrow\RR$, $m$-times weakly differentiable, we define the norm
\begin{equation*}
\|\phi\|_m:=\left(\sum_{j=0}^m \int_{\RR} r^2(z) \left(\phi^{(j)}(z)\right)^2 dz \right)^{1/2},
\end{equation*}
with
\begin{equation}\label{eq.r.def}
r(z):= \exp\left(\eta \sqrt{1+z^2}\right), 
\end{equation}
with some constant $\eta>1$.
Following \cite{KrylovGyongy}, we define the weighted Sobolev space $\bW^m$ (consisting of $m$-times weakly differentiable functions from $\RR$ to $\RR$) as the closure of $C^{\infty}_0(\RR)$ in the $\|.\|_m$ norm.

\begin{ass}\label{ass:ass.2}
Assume that $U_0$ is strictly concave, $\partial^2_{xx} U_0(\exp(\cdot)), \log(-\partial^2_{xx} U_0(\exp(\cdot))) \in \bW^3$, and that $|\lambda|$ has locally integrable paths.
\end{ass}

We now present one of the main results of this paper.

\begin{proposition}\label{prop:SPDE.main}
Let $\pi^*$, $U_0$, $\sigma$, and $\lambda$, satisfy  Assumptions \ref{ass:ass.1} and \ref{ass:ass.2}. Then, there exists a unique random field $R$ which solves (\ref{eq.R.SPDE.new}), with the initial condition $R_0 = -\partial^2_{xx} U_0$, and is such that $R_t(\log\cdot)$ takes values in $\bW^3$. The random field $R_{\cdot}(\cdot)$ is almost surely continuous and strictly positive.

In addition, for any constant $\bar{x}>0$ and any locally square integrable $\RR^d$-valued process $\left(a_t(\bx)\right)_{t\geq0}$, the random field $\left(U_t(x)\right)_{t\geq0,\,x>0}$, given by
\begin{equation}\label{eq.U.def.prop}
U_t(x) = \zeta_t + \int_{\bar{x}}^x \int_y^{\infty} R_t(z) dz dy,
\end{equation}
with
\begin{eqnarray*}
&& d\zeta_t = -\frac{1}{2} \|\sigma_t\pi^*_t(\bx)\|^2 R_t(\bx) dt + a^T_t(\bx) dW_t,\quad \zeta_0=U_0(\bx),
\end{eqnarray*}
is strictly concave and strictly increasing in $x$, and satisfies (\ref{SPDE}), with the volatility $a$ given by (\ref{eq.a.def.new}).
Moreover, for the given $\pi^*$, (\ref{portfolio-sde}) holds, and there exists a unique solution to (\ref{wealth-sde}), for any $(\xi,\tau)\in\mX$.
\end{proposition}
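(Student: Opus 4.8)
The plan is to solve the linear SPDE (\ref{eq.R.SPDE.new}) for $R$ inside a weighted Sobolev scale, to recover $U$ from $R$ by integrating twice in space, and to treat the state equation (\ref{wealth-sde}) by a logarithmic substitution. First I would pass to log-wealth variables: set $\tilde R_t(z):=R_t(e^z)$ for $z\in\RR$ and rewrite (\ref{eq.R.SPDE.new}) as a linear second-order SPDE on the whole line. A direct computation shows that, after this substitution, every coefficient is a smooth function of $z$ assembled from $\lambda_t$ and from the quantities $\sum_j\sigma^{ij}_t(\partial_z)^m(e^{-z}\pi^{*j}_t(e^z))$; Assumption \ref{ass:ass.1} is tailored precisely so that these, together with enough of their $z$-derivatives, are bounded uniformly in $z$ by a locally bounded process. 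The equation is parabolic but \emph{degenerate}: its leading second-order coefficient is one half the squared norm of the first-order noise coefficient, so the coercivity constant vanishes identically. The initial datum $-\partial^2_{xx}U_0(e^\cdot)$ lies in $\bW^3$ by Assumption \ref{ass:ass.2}.

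Well-posedness then follows from the $L_2$-theory of (degenerate) linear SPDEs in weighted Sobolev spaces of \cite{KrylovGyongy}: it yields a unique solution $\tilde R$ with $\tilde R_t\in\bW^3$ and a.s.\ continuous paths, and $\bW^3$ embeds into a space of continuous functions, so $R_{\cdot}(\cdot)$ is a.s.\ jointly continuous. The loss of derivatives inherent in the degenerate theory is what forces the fifth-order regularity: the coefficients of the $R$-equation already involve two spatial derivatives of $\pi^*$, and solving in $\bW^3$ costs three more. For strict positivity I would instead solve the equation satisfied by $Q:=\log\tilde R$; by It\^o's formula this is again a (degenerate) parabolic SPDE, now quasilinear, with a gradient-quadratic lower-order term produced by the It\^o correction, and with initial datum $\log(-\partial^2_{xx}U_0(e^\cdot))\in\bW^3$ — which is exactly why that hypothesis appears in Assumption \ref{ass:ass.2}. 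Solving this equation in $\bW^3$ and setting $\tilde R:=e^Q$ produces a strictly positive solution of the linear equation, which by uniqueness coincides with the one above (alternatively, one invokes a maximum-principle argument directly for the linear equation). This step — correctly casting a borderline-degenerate SPDE into the weighted-Sobolev framework, tracking the derivative count, and extracting positivity — is the main obstacle.

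Given $R$, I recover $U$ from (\ref{eq.U.def.prop}). Since $\eta>1$, $\bW^3$ embeds into a space of functions decaying fast enough that $\int_y^\infty R_t(z)\,dz$ and the double integral in (\ref{eq.U.def.prop}) converge; differentiating under the integral sign gives $\partial_xU_t(x)=\int_x^\infty R_t(z)\,dz>0$ and $\partial^2_{xx}U_t(x)=-R_t(x)<0$, so $U$ is strictly increasing and strictly concave. To verify that $U$ solves (\ref{SPDE}) with the volatility (\ref{eq.a.def.new}), I integrate the $R$-SPDE twice in $x$, using a stochastic Fubini theorem to commute the spatial integrations with $dW_t$ and with $dt$, and add the dynamics of $\zeta_t$; the result is precisely (\ref{eq.linSPDE.new.1}), which by the identities (\ref{eq.a.def.new})--(\ref{eq.a.def.new.2}) is (\ref{SPDE}), with the constants of integration absorbed into $\zeta$ and into the free process $a_t(\bx)$. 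Dividing (\ref{eq.a.def.new.2}) by $\partial^2_{xx}U_t(x)=-R_t(x)$ then gives (\ref{portfolio-sde}).

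Finally, for (\ref{wealth-sde}) I set $Y_t:=\log X^{*,\xi,\tau}_t$; by It\^o's formula the drift and diffusion coefficients of $Y$ are, up to the $\lambda$-factor and quadratic corrections, of the form $\sigma_t\cdot(e^{-Y}\pi^*_t(e^{Y}))$, which by the $m=0,1$ cases of Assumption \ref{ass:ass.1} are globally Lipschitz in $Y$ with a locally bounded (in $t$) Lipschitz constant and of bounded growth. Standard SDE theory then yields a unique non-exploding strong solution $Y$, and $X^{*,\xi,\tau}:=e^{Y}>0$ is the unique strong solution of (\ref{wealth-sde}). Beyond Step 2, the remaining work — the stochastic-Fubini verification above and these SDE estimates — is lengthy but routine.
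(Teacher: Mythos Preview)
Your outline follows the paper's proof almost step for step: the log-wealth change of variables, the appeal to the degenerate linear theory of \cite{KrylovGyongy} in $\bW^3$, positivity via the equation for $\log\tilde R$, recovery of $U$ by two spatial integrations justified with stochastic Fubini, and the Lipschitz argument for (\ref{wealth-sde}).

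The one place where your description diverges from the paper is the positivity step, and it is worth flagging. You write that the SPDE for $Q=\log\tilde R$ is ``quasilinear, with a gradient-quadratic lower-order term produced by the It\^o correction,'' and then propose to solve it in $\bW^3$. In fact the equation for $Q$ is \emph{linear}: the $(\partial_z Q)^2$ term coming from $\partial^2_{zz}\tilde R/\tilde R$ in the drift is $\tfrac12\|e^{-z}\sigma_t\pi^*_t(e^z)\|^2(\partial_z Q)^2$, and the It\^o correction contributes $-\tfrac12\|e^{-z}\sigma_t\pi^*_t(e^z)\|^2(\partial_z Q)^2$ from the square of the gradient part of the noise. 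These cancel precisely because the original equation is degenerate (the second-order drift coefficient equals one half the squared first-order noise coefficient, as you yourself note). The paper exploits this cancellation to write down a linear SPDE for $Y:=\log\tilde R$ (their (\ref{eq.log.SPDE})), solve it by the \emph{same} Theorem 2.5 of \cite{KrylovGyongy}, and then set $\tilde R=e^Y$; uniqueness for the linear $\tilde R$-equation finishes the argument. Your version, taken literally, would require well-posedness for a quasilinear degenerate SPDE, which is not delivered by \cite{KrylovGyongy} and would need separate justification; once you carry out the computation the issue disappears, but the cancellation is the point that makes the step go through cleanly.
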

\begin{proof}
First, we transform (\ref{eq.R.SPDE.new}) with the simple change of variables, $x=\exp(z)$, introducing $\tilde{R}_t(z):=R_t(e^z)$, and (\ref{eq.R.SPDE.new}) becomes
\begin{eqnarray}
&&d\tilde{R}_t(z)= \frac{1}{2}\left[ (\partial_z+1)\left(\|e^{-z}\sigma_t\pi^*_t(e^z)\|^2 \partial_{z}\tilde{R}_t(z)\right)
+ (\partial_z+1)\left(\|e^{-z}\sigma_t\pi^*_t(e^z)\|^2 \right) \partial_z\tilde{R}_t(z)\right.\nonumber\\ 
&&\label{eq.SPDE.solvable}\phantom{???????????????????????}
\left. + (\partial^2_{zz}+3\partial_z+2) \left(\|e^{-z}\sigma_t\pi^*_t(e^z)\|^2 \right) \tilde{R}_t(z)\right] dt\\
&& - \left[ e^{-z} \sigma_t\pi^*_t(e^z) \partial_z \tilde{R}_t(z) + \left(\lambda_t + (\partial_z+1)\left(e^{-z}\sigma_t\pi^*_t(e^z)\right)\right) \tilde{R}_t(z)\right]^T dW_t,\nonumber
\end{eqnarray}
Notice that the SPDE (\ref{eq.SPDE.solvable}) is linear and (degenerate) parabolic. In particular, it belongs to the class of equations analyzed in \cite{KrylovGyongy}.
Following this reference, we refer to Example 2.2 in \cite{KrylovGyongy}, and the preceding discussion, to conclude that the conditions of Theorem 2.5 in \cite{KrylovGyongy} are satisfied, with $m=3$ and $\Gamma=1$. The latter theorem states that there exists a unique generalized solution $\tilde{R}$ to (\ref{eq.SPDE.solvable}), with $\tilde{R}_0(z) = -\partial^2_{xx} U_0(e^z)$, which is a progressively measurable process with values in $\bW^3$, having continuous paths in $\bW^2$.
Notice that $\tilde{R}_t\in \bW^3$ implies that $\tilde{R}_t(.)$ is twice continuously differentiable. Hence, the random field $\tilde{R}_\cdot(\cdot)$ is almost surely continuous and strictly positive, and the spatial derivatives in (\ref{eq.SPDE.solvable}) can be understood in the classical sense. Then, changing the variables back to $x=\exp(z)$, we conclude that $R_t(x):=\tilde{R}_t(\log x)$ solves (\ref{eq.R.SPDE.new}). Reverting these arguments, we obtain uniqueness of the solution to (\ref{eq.R.SPDE.new}).

Next, we show that $R$ is strictly positive. Notice that it suffices to find a progressivley measurable random field $Y$, such that $\exp(Y)$ is a generalized solution to (\ref{eq.SPDE.solvable}), with the initial condition $-\partial^2_{xx} U_0(e^z)$. Then, from uniqueness, we conclude that $\tilde{R}$, and hence $R$, are positive.
To this end, we define $Y$ as the unique generalized solution to the following SPDE
\begin{eqnarray}
&&d Y_t(z) = \frac{1}{2} \left[ (\partial_z+1)\left(\|e^{-z}\sigma_t\pi^*_t(e^z)\|^2 \partial_z Y_t(z)\right) 
- 2\lambda^T_t e^{-z} \sigma_t \pi^*_t(e^z) \partial_z Y_t(z)\right.\nonumber\\
&&\label{eq.log.SPDE}\phantom{?????????????}
\left.+ (\partial^2_{zz}+3\partial_{z}+2)\|e^{-z}\sigma_t\pi^*_t(e^z)\|^2 - \|\lambda_t + (\partial_z+1)\left(e^{-z} \sigma_t \pi^*_t(e^z)\right)\|^2\right] dt\\
&&- \left[ e^{-z} \sigma_t\pi^*_t(e^z) \partial_z Y_t(z) + \lambda_t + (\partial_z+1)\left(e^{-z}\sigma_t \pi^*_t(e^z)\right) \right]^T dW_t,\nonumber
\end{eqnarray}
with the initial condition
\begin{equation*}
Y_0(z) = \log \tilde{R}_0(z) = \log(-\partial^2_{xx} U_0(e^z)).
\end{equation*}
Theorem 2.5 in \cite{KrylovGyongy} states that the above equation has a unique generalized solution. Applying It\^o's formula, we deduce that $\exp(Y)$ solves (\ref{eq.SPDE.solvable}), with the initial condition $-\partial^2_{xx} U_0(e^z)$. The uniqueness of the solution implies $\exp(Y) = \tilde{R}$, hence, we conclude that $R$ is strictly positive.

Finally, we need to verify that the random field $U$, defined by (\ref{eq.U.def.prop}), is well defined and has the desired properties.
To this end, we define
\begin{equation*}
V_t(x)= \int_x^{\infty} R_t(y) dy.
\end{equation*}
Note that the above integral is well defined due to the choice of $r$ (cf. (\ref{eq.r.def})) and the fact that $\tilde{R}_t=R_t(\exp(\cdot))$ takes values in $\bW^3\subset\bW^0$: 
$$
\int_x^{\infty} R_t(y) dy
= \int_{\log x}^{\infty} e^z\tilde{R}_t(z) dz
\leq \left(\int_{\log x}^{\infty} r^2(z) \tilde{R}^2_t(z) dz \right)^{1/2} \left(\int_{\log x} e^{2z-2\eta \sqrt{1+z^2}} dz\right)^{1/2}<\infty
$$
Similarly, it is easy to deduce that $\partial_x R_t(\cdot)$ and $\partial^2_{xx} R_t(\cdot)$ are absolutely integrable over $(\varepsilon,\infty)$, for any $\varepsilon>0$.
Applying the stochastic Fubini theorem (cf. Theorem 64 in \cite{Protter.book}), we integrate (\ref{eq.R.SPDE.new}) to deduce that $V$ satisfies (\ref{eq.V.SPDE.new}), with the initial condition $V_0(x) = \partial_x U_0(x)$.\footnote{Strictly speaking, in order to apply Theorem 64 in \cite{Protter.book}, we need to localize $R$ and pass to the limit in the integrals over finite domain. We skip these routine arguments for the sake of brevity.}
Applying stchastic Fubini therem again, we integrate (\ref{eq.V.SPDE.new}), to show that $U$, defined by (\ref{eq.U.def.prop}), satisfies the SPDE (\ref{eq.linSPDE.new.1}).
It is clear that $U_t(\cdot)$ is strictly concave, as $R$ is strictly positive. Then, choosing $a_t$ via (\ref{eq.a.def.new}), we conclude that $U$ satisfies (\ref{SPDE}). In turn, equation (\ref{eq.a.def.new.2}) yields (\ref{portfolio-sde}).
Finally, Assumption \ref{ass:ass.1} implies that $\sigma_t\pi^*_t(\cdot)$ is globally Lipschitz, uniformly over $(t,\omega)$, which yields the existence and uniqueness of the solution to (\ref{wealth-sde}), for any initial condition $(\xi,\tau)\in\mX$.
\qed
\end{proof}

\begin{remark}
Proposition \ref{prop:SPDE.main} can be extended to hold with any positive weight function $r$, satisfying the condition ($\tilde{W}$) in \cite{KrylovGyongy}, and such that 
$$
\int_x^{\infty} \frac{e^{2z}}{r^2(z)} dz <\infty,\quad \forall\,x\in\RR.
$$
\end{remark}

\begin{remark}\label{rem:pos.wealth.rem2}
It is straight-forward to formulate the version of Proposition \ref{prop:SPDE.main} for the case where the wealth variable $x$ takes values in $\RR$ (as opposed to being restricted to $(0,\infty)$). This would correspond to the investment problems in which the wealth is not restricted to remain positive (cf. Remark \ref{rem:rem.1}). We did not find a unifying formulation that would allow us to treat both cases (i.e. $x\in\RR$ and $x>0$) simultaneously, and we chose to consider the case $x>0$. This choice is motivated by the example in Section \ref{se:example} which shows that, in the case $x>0$, in the Black-Scholes model, one can construct explicitly an optimal contract which also satisfies the limited liability condition. Currently, we do not know how to ensure the limited liability condition for the case $x\in\RR$, even in the context of this simple example.
\end{remark}

\begin{remark}
An alternative description of the solutions to (\ref{SPDE}), using duality methods, is given in \cite{ElKaroui}, \cite{ElKaroui.2}. However, the present construction is much shorter and more direct, and it allows us to obtain explicit solutions, as illustrated in Section \ref{se:example}. It is also worth mentioning that the Markovian solutions to (\ref{SPDE}) are analyzed in \cite{NadtochiyTehranchi}. 
\end{remark}

Propositions \ref{prop:FPP} and \ref{prop:SPDE.main} allow us to establish the following characterization of an optimal contract, which is the main result of this paper.

\begin{theorem}\label{thm:main}
Consider any initial capital $X_0>0$, as well as any $\lambda$ and $U_0$, satisfying Assumption \ref{ass:ass.2} and such that $U_0(X_0)>0$. 
Assume that there exists a progressively measurable random field $\pi^*$, such that $\pi^*$ and $\sigma$ satisfy Assumption \ref{ass:ass.1}, $\pi^*(X^{*,X_0,0})\in\mathcal{A}(X_0,0)$, and
$$
\EE J(\pi)\leq \EE J\left(\pi^*\right),
$$
for any $\pi$ that is $C$-optimal for some admissible contract $C$.
Let $U$ be defined as in Proposition \ref{prop:SPDE.main}, with any constant $\bar{x}>0$ and any locally square integrable $\RR^d$-valued process $\left(a_t(\bx)\right)_{t\geq0}$.
Then, the following holds.
\begin{enumerate}
\item For any $(\xi,\tau)\in\mX$ and any $\pi\in\mathcal{A}(\xi,\tau)$, the process $\left(U_t\left( X_{t}^{\pi,\xi,\tau}\right)\right)_{t\in[\tau,T]}$ is a local supermartingale.
\item For any $(\xi,\tau)\in\mX$, there exists a unique solution $X^{*,\xi,\tau}$ to (\ref{wealth-sde}), and the process $\left(U_t\left( X_{t}^{*,\xi,\tau}\right)\right)_{t\in[\tau,T]}$ is a local martingale. 
\item If the aforementioned local martingale and local supermartingales are a true martingale and true supermartingales, respectively, then,
$$
C^*(x):= U_T(x)\frac{u_0}{U_0(X_0)}
$$
is an optimal contract.
\end{enumerate}
\end{theorem}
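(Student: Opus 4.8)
The plan is to read off the three claims from Propositions~\ref{prop:FPP} and~\ref{prop:SPDE.main}, combined with the normalization observation made at the start of Section~\ref{se:solution}. Items~(1) and~(2) are almost immediate. By Proposition~\ref{prop:SPDE.main}, the random field $U$ given by~(\ref{eq.U.def.prop}) is strictly concave (and strictly increasing) in $x$, solves the SPDE~(\ref{SPDE}) with volatility $a$ as in~(\ref{eq.a.def.new}), satisfies~(\ref{portfolio-sde}) with the prescribed $\pi^*$, and is such that~(\ref{wealth-sde}) has a unique strong solution $X^{*,\xi,\tau}$ for every $(\xi,\tau)\in\mX$; the smoothness of $U$ and $a$ demanded by Proposition~\ref{prop:FPP} follows from $R_t(\log\cdot)\in\bW^3$ and Assumption~\ref{ass:ass.1}. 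Thus items~(1) and~(2) of Proposition~\ref{prop:FPP} apply and deliver the asserted local supermartingale and local martingale properties, once one notes that~(\ref{wealth-sde}) is~(\ref{eq.Xpi.def}) with $\pi=\pi^*(X^{*,\xi,\tau})$, by transposition of the matrix--vector products.

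For item~(3), I would first check that $C^*$ is an admissible contract in the sense of Definition~\ref{def:AdmisCont}. Since $u_0>0$ and $U_0(X_0)>0$ (recall that $U_0$ is deterministic), the constant $u_0/U_0(X_0)$ is strictly positive, so $C^*$ is a positive multiple of $U_T$; absolute integrability of $C^*(X^{\pi,\xi,\tau}_T)$ is then inherited from the assumed true supermartingale property of $(U_t(X^{\pi,\xi,\tau}_t))$. The random field in Definition~\ref{def:AdmisCont} is the given $\pi^*$: existence and uniqueness of $X^{*,\xi,\tau}$ is from Proposition~\ref{prop:SPDE.main}, and admissibility of $\pi^*(X^{*,\xi,\tau})$ is the standing hypothesis. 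The comparison inequality, together with its ``equality only if $\pi=\pi^*$'' clause, is obtained by multiplying the conclusion of item~(3) of Proposition~\ref{prop:FPP} by the positive constant $u_0/U_0(X_0)$ -- the hypothesis that the local (super)martingales are true ones being exactly what permits taking conditional expectations there. Finally, since $X^{*,X_0,0}_0=X_0$ and $(U_t(X^{*,X_0,0}_t))$ is a true martingale, $\EE\,U_T(X^{*,X_0,0}_T)=U_0(X_0)$, so the participation constraint holds, and in fact with equality, $\EE\,C^*(X^{*,X_0,0}_T)=u_0$. In particular $\pi^*$ is the (unique) $C^*$-optimal strategy.

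It remains to establish the optimality inequality of Definition~\ref{def:OptCont}. Let $C$ be any admissible contract, $\pi$ any $C$-optimal strategy and $\pi^*$ the $C^*$-optimal one. Because $\pi$ is $C$-optimal, the wealth it generates from $(X_0,0)$ is the optimal wealth for $C$, so the participation constraint in Definition~\ref{def:AdmisCont} gives $\EE\,C(X^{\pi,X_0,0}_T)\geq u_0$, while the standing hypothesis on $J$ gives $\EE\,J(\pi)\leq\EE\,J(\pi^*)$. Combining these with the equality $\EE\,C^*(X^{*,X_0,0}_T)=u_0$ from the previous step,
$$
\EE\left(J(\pi)-C(X^{\pi,X_0,0}_T)\right)\;\leq\;\EE\,J(\pi)-u_0\;\leq\;\EE\,J(\pi^*)-u_0\;=\;\EE\left(J(\pi^*)-C^*(X^{*,X_0,0}_T)\right),
$$
which is exactly the claim that $C^*$ is optimal.

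The points needing the most care (rather than the most computation) are two. First, the ``equality only if $\pi=\pi^*$'' part of admissibility genuinely relies on the \emph{strict} concavity of $U$, equivalently the strict positivity of $R$ established in Proposition~\ref{prop:SPDE.main} via the logarithmic substitution; it is this strictness that forces the drift of $U_t(X^{\pi}_t)$ to be strictly negative whenever $\pi\neq\pi^*$. Second, Definition~\ref{def:AdmisCont} requires the admissibility/optimality properties of $\pi^*$ for \emph{all} $(\xi,\tau)\in\mX$, whereas the hypothesis only records $\pi^*(X^{*,X_0,0})\in\mathcal{A}(X_0,0)$; closing this gap is a matter of observing that in the settings of interest -- in particular that of Section~\ref{se:example}, where $\mathcal{A}$ consists of all strategies keeping wealth positive and $\sigma\pi^*$ is globally Lipschitz by Assumption~\ref{ass:ass.1} -- the relevant strategies are admissible for every initial condition. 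Everything else is a direct invocation of the two preceding propositions and of the normalization trick.
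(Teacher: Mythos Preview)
Your proposal is correct and follows essentially the same route as the paper's own proof: reduce items~(1) and~(2) to Propositions~\ref{prop:FPP} and~\ref{prop:SPDE.main}, verify admissibility of $C^*$ via item~(3) of Proposition~\ref{prop:FPP} together with the martingale identity $\EE\,C^*(X^{*,X_0,0}_T)=u_0$, and conclude with the same chain of inequalities $\EE\,J(\pi)-\EE\,C\leq \EE\,J(\pi)-u_0\leq \EE\,J(\pi^*)-u_0=\EE\,J(\pi^*)-\EE\,C^*$. Your write-up is in fact more careful than the paper's in flagging the discrepancy between the hypothesis $\pi^*(X^{*,X_0,0})\in\mathcal{A}(X_0,0)$ and Definition~\ref{def:AdmisCont}'s requirement that $\pi^*(X^{*,\xi,\tau})\in\mathcal{A}(\xi,\tau)$ for all $(\xi,\tau)$; the paper simply asserts ``the admissibility of $\pi^*(X^*)$'' without comment.
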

\begin{proof}
Proposition \ref{prop:SPDE.main} implies that $U$, $a$, and $\pi^*$, satisfy all the assumptions of Proposition \ref{prop:FPP}. The first two statements of the theorem follow immediately.
To show the last statement, we notice that the admissibility of $\pi^*(X^*)$, the integrability of $C^*(X^{\pi})$, and the last part of Proposition \ref{prop:FPP}, imply that $C^*$ is an admissible contract and that $\pi^*$ is $C^*$-optimal.
To conclude, consider any admissible contract $C$ and any $C$-optimal $\pi$.
Then, we have
$$
\EE \left[J(\pi) - C\left(X^{\pi,X_0,0}_T\right)\right] 
\leq \EE J(\pi) - u_0
\leq \EE J\left(\pi^*\right) - u_0
= \EE \left[J\left(\pi^*\right) - C^*\left(X^{*,X_0,0}\right)\right],
$$
where the first inequality follows from the admissibility of $C$ and the $C$-optimality of $\pi$, and the second inequality follows from the assumptions of the theorem.

\qed
\end{proof}

The next section illustrates the application of the above theorem. It describes a specific market model and a concrete contract design problem, for which the present definition of optimal contract is natural, and it shows how to construct an optimal contract explicitly. Moreover, the resulting optimal contract satisfies the limited liability condition: $C\geq0$ (note, however, that this condition is not guaranteed by Theorem \ref{thm:main}).

\section{Explicit optimal contract in the Black-Scholes model}
\label{se:example}

In this section, we assume that $d=k=2$, and
\begin{eqnarray*}
&&d\log(S^1_t) = (\mu_1 - \sigma^2_1/2)dt + \sigma_1 dW^1_t,\\
&&d\log(S^2_t) = (\mu_2 - \sigma^2_2/2)dt + \sigma_2 (\rho dW^1_t + \sqrt{1-\rho^2} dW^2_t),
\end{eqnarray*}
with some $\mu_1,\mu_2\in\RR$, $\sigma_1,\sigma_2>0$, and $\rho\in(-1,1)$.
In other words,
\begin{equation*}
\sigma=
\left(
\begin{array}{cc}
{\sigma_1} & {\sigma_2 \rho}\\
{0} & {\sigma_2\sqrt{1-\rho^2}}
\end{array}
\right),
\quad \mu=
\left(
\begin{array}{c}
{\mu_1} \\
{\mu_2}
\end{array}
\right),
\quad \lambda = (\sigma^T)^{-1} \mu=
\left(
\begin{array}{c}
{\mu_1/\sigma_1} \\
{\frac{\mu_2 - (\sigma_2\rho \mu_1)/\sigma_1}{\sigma_2\sqrt{1-\rho^2}}}
\end{array}
\right).
\end{equation*}

Let us fix a constant $\gamma\in(-\infty,0)\cup(0,1)$, whose meaning is explained below.
We let $\mX$ consist of all pairs $(\xi,\tau)$, s.t. $\tau$ is any stopping time with values in $[0,T]$ and $\xi,\xi^{\gamma}\in L^1\cap L^0_+(\mathcal{F}_\tau)$. For any $(\xi,\tau)\in\mX$, we define $\mathcal{A}(\xi,\tau)$ as the set of all locally integrable processes $\pi$, s.t. the resulting $X^{\pi,\xi,\tau}$ is strictly positive and
$$
\EE \sup_{t\in[\tau,T]} X^{\pi,\xi,\tau}_t\,+\,\EE \sup_{t\in[\tau,T]} \left(X^{\pi,\xi,\tau}_t\right)^{\gamma}<\infty.
$$

Next, consider an investor who is looking to hire an agent to manage her initial capital $X_0$.
As discussed in the introduction, we assume that the contract between the agent and the investor is designed by a third party, referred to as the principal (e.g., it can be a regulator, the board of directors of a mutual fund, etc.).
The principal chooses an optimal contract using the following individual objective:
\begin{equation}\label{eq.Princ.Obj.def}
J(\pi) = \frac{1}{\gamma} \left(X^{\pi,X_0,0}_T\right)^{\gamma}\,\bone_{\{\pi^2\equiv0\}} - \infty\cdot(1-\bone_{\{\pi^2\equiv0\}}),
\end{equation}
where $\pi$ is a random field, and $X^{\pi,X_0,0}$ is generated by (\ref{eq.Xpi.def}), with $\pi=\pi\left(X^{\pi,X_0,0}\right)$. The rationale behind this choice is as follows. The principal assumes (e.g., based on her estimates) that a typical investor uses power utility, with the relative risk aversion $1-\gamma$, and she adds the constraint that no investment can be made in $S^2$, as the latter asset is deemed inappropriate (e.g., immoral, subject to sanctions, etc.).

Note that the investor may not be interested in the constraint $\pi^2\equiv0$ being met: e.g., in accordance with the assumption of the principal, she may aim to optimize the expected power utility, without the constraint. After the contract is initiated, the investor may have an opportunity to increase the size of her investment, at some stopping time $\tau$, to a random level $\xi$. As the investor may not care about the constraint $\pi^2\equiv0$, a priori, her capital injection may encourage the agent to violate this constraint. Neither the principal nor the agent are aware of the probabilistic properties of $(\xi,\tau)$ (i.e., we take the approach of Knightian uncertainty with regards to the opportunities of capital injections). In particular, after any capital injection $(\xi,\tau)$, the agent maximizes the expected value of the worst-case future scenario, which corresponds to no future opportunities for capital injections (as she can always choose not to use such an opportunity). Thus, after every capital injection $(\xi,\tau)\in\mX$, the agent solves
$$
\max_{\pi\in\mathcal{A}(\xi,\tau)} \EE \left(C(X^{\pi,\xi,\tau}_T)\mid \mathcal{F}_{\tau} \right).
$$

The regulator's task is two-fold. First, she needs to ensure that the investor is as happy with the contract as possible, given the constraint $\pi^2\equiv0$. Namely, the contract should be such that every optimal strategy of the agent maximizes the expectation of (\ref{eq.Princ.Obj.def}) less the expected payment to the agent, even in the presence of capital injections by the investor. Since these injections are not known to the regulator, she aims to maximize the worst case scenario for the investor, which is the case of no future opportunities for capital injections (as the investor can always choose not to use such an opportunity). This leads to the following objective for the regulator: find admissible contract $C^*$, s.t., for any $C^*$-optimal $\pi^*$, $(C^*,\pi^*)$ maximizes
\begin{equation}\label{eq.Regulator.obj}
\EE\left[ J(\pi) - C(X^{\pi,X_0,0}_T) \right]
\end{equation}
among all pairs $(C,\pi)$ with admissible $C$ and $C$-optimal $\pi$. 
It is easy to see that, if $C^*$ is an optimal contract, in the sense of Definition \ref{def:OptCont}, then it solves the first task of the regulator.
The second task of the regulator is to ensure that the investor will not encourage the agent to invest in the second asset by her capital injections.
This task is resolved by the admissibility property of an optimal contract $C^*$: cf. Definitions \ref{def:AdmisCont} and \ref{def:OptCont}. 
Indeed, the admissibility implies that, after each capital injection, it is still optimal for the agent to follow the optimal strategy (understood as a random field) computed under the assumption of no capital injections. The latter strategy does not invest in $S^2$, as the pair $(C^*,\pi^*)$ maximizes the objective (\ref{eq.Regulator.obj}).
In the following subsections, we construct an optimal contract $C^*$ explicitly.

\subsection{Principal's optimal strategy}

Following the solution approach outlined at the beginning of Section \ref{se:solution}, we, first, search for a random field $\pi^{*1}$, s.t.
$$
\pi^{*1}(X^*)\in\text{argmax} \frac{1}{\gamma} \EE \left(X^{\pi,X_0,0}_T\right)^{\gamma},
$$
where $X^*$ is the associated optimal wealth (starting from $X_0$ at time zero), and the supremum is taken over all processes $\pi^{1}$, s.t. $\pi=(\pi^{1},0)^T\in\mathcal{A}(X_0,0)$.
The wealth process, in this case, satisfies
$$
X^{\pi,X_0,0}_{0}=X_0\in\RR,\quad
dX_{s}^{\pi,X_0,0}= \pi^1_s \sigma_1 \lambda_1 ds + \pi^1_s \sigma_1 dW^1_{s},\quad s\in[0,T].
$$
The solution to the above optimal investment problem is well known, but we briefly outline it here, for the sake of completeness.
The associated HJB equation for the value function $V$ is
$$
\partial_t V + \max_{\pi^1}( \pi^1_s \sigma_1 \lambda_1 \partial_x V + \frac{1}{2}(\pi^1)^2 \sigma^2_1\partial^2_{xx} V)=0,
\quad x>0,\,\,t\in(0,T),
\quad V(T,x)=x^{\gamma}/\gamma.
$$
This yields
\begin{equation}
V(t,x)=\frac{x^{\gamma}}{\gamma} \exp\left( (T-t) \frac{\lambda^2_1 \gamma}{2(1-\gamma)}\right),
\quad \pi^{*1}_t(x) = \frac{\lambda_1}{\sigma_1(1-\gamma)} x,
\label{HJB-1}
\end{equation}
\begin{equation}\label{eq.princ.opt.wealth}
X^{*}_{0}=X_0>0,\quad
dX_{s}^{*}= \frac{\lambda^2_1}{1-\gamma} X^*_s ds +\frac{\lambda_1}{1-\gamma} X^*_s dW^1_{s},\quad s\in[0,T].
\end{equation}
A standard verification argument shows that, indeed, $V$ is the value function of the optimization problem, $\pi^{*1}(X^*)$ is the optimal policy, and $X^*$ is the optimal wealth (note that $X^*$ is a geometric Brownian motion, hence, $\pi^*(X^*)\in\mathcal{A}(X_0,0)$). In particular, it follows that
$$
J(\pi)\leq J(\pi^*),
$$
for any $\pi$ that is $C$-optimal for some admissible contract $C$, with $J$ given by (\ref{eq.Princ.Obj.def}).

\subsection{Fake optimal contracts}
\label{subse:fake.opt.cont}

Recall that the notion of optimal contract used herein (cf. Definition \ref{def:OptCont}) is stronger than usual. The main additional requirement of the present definition is that the contract is robust w.r.t. capital shifts. In this subsection, we show how to construct a (trivial) contract that does not possess this feature, to illustrate the differences.

Recall the optimal wealth process of the principal, $X^*$, given by (\ref{eq.princ.opt.wealth}), and consider the following contract:
\begin{equation}\label{eq.fake.opt.cont}
\hat{C}(x):=u_0\bone_{\{X^*_T\}}(x)
\end{equation}
Note that, as long as $X^*_T$ is attainable from the current wealth level, the agent will always aim for $X^*_T$ as the terminal wealth, according to such contract. From the non-degeneracy of the market (i.e. the columns of $\sigma$ are linearly independent), it follows that the agent will keep following the prescribed strategy $\pi^*(X^*)$, given by (\ref{HJB-1}), as this is the only strategy that generates $X^*_T$. As a result, the contract $\hat{C}$ leaves both the principal and the agent satisfied. In fact, the above construction is well known in the optimal contract theory, and it always works for the first-best (risk sharing) problems. However, the resulting contract $\hat{C}$ is not robust with respect to capital injections. Indeed, if the current wealth level is perturbed, the new set of attainable terminal wealth values may not include $X^*_T$ anymore. In this case, it is not clear which strategy the agent will choose: in fact, in the case of a positive capital injection, the contract will actually provide an incentive for the agent to ``lose" (or steal) funds (which, strictly speaking, is not allowed in the model, but can certainly happen in practice). In particular, there is no guarantee that the agent will follow a strategy that is best for the principal after a capital injection is made. One can modify the definition of the ``fake" optimal contract (\ref{eq.fake.opt.cont}), by using functions other than indicator, and, e.g., obtain contracts that are non-decreasing in the terminal wealth. Nevertheless, such modifications will not resolve the main problem: the agent is not guaranteed to follow the prescribed strategy (viewed as a random field) after a capital injection is made. 

To conclude this subsection, we illustrate the importance of the fact that the individual objective of the principal, $J$, given by (\ref{eq.Princ.Obj.def}), depends on the strategy $\pi$ in a more general way than through the terminal wealth $X^{\pi}_T$ alone. Recall that the principal needs to ensure that the agent's strategy satisfies the constraint $\pi^2\equiv0$ (this is what we call an endogenous constraint). Then, if the principal's individual objective were a deterministic function of terminal wealth, e.g.,
\begin{equation*}
\tilde{J}(X^{\pi}) = \frac{1}{\gamma} \left(X^{\pi}_T\right)^{\gamma},
\end{equation*}
we could maximize the expectation of this objective, to obtain an optimal strategy $\tilde{\pi}^*$ (viewed as a random field), and choose the contract
$$
\tilde{C}(x):=\tilde{J}(x)\frac{u_0}{\EE \,\tilde{J}(X^{*}_T)}.
$$
Note that $\EE \,\tilde{J}(X^{\pi}_T)$ is indeed maximized by the desired optimal strategy $\pi^*$. The dynamic programming principle also implies that $\pi^*$ (as a random field) remains optimal for the agent, for any initial wealth level, and at any starting time. Thus, $\tilde{C}$ would be a (trivial) optimal contract, in the sense of Definition \ref{def:OptCont}. Nevertheless, this construction is only possible if the individual objective of the principal depends on $\pi$ through $X^{\pi}_T$ only. Recall, however, that, in the present formulation, $J(\pi)$ depends directly on $\pi$, via the constraint $\pi^2\equiv0$. Hence, if we use $\EE \left(X^{\pi}_T\right)^{\gamma}$ as the objective in the unconstrained problem, faced by the agent, it may not yield the same optimal strategy $\pi^*$. Indeed, the optimal contract constructed explicitly in the next subsection does not coincide with the power function with exponent $\gamma$; in fact, it becomes a random function of terminal wealth.

\subsection{Optimal contract}
\label{subse:ex.opt.cont}

Recall that $\pi^*_t(x)=(\pi^{*1}x,0)^T$, with
$$
\pi^{*1}=\frac{\lambda_1}{\sigma_1(1-\gamma)},
$$
maximizes the individual objective of the principal.
Following Proposition \ref{prop:SPDE.main} and Theorem \ref{thm:main}, we start by solving the SPDE (\ref{eq.R.SPDE.new}), which, in the present case, becomes
\begin{eqnarray*}
&&d R_t(x) = \frac{1}{2}\left[ \sigma^2_1(\pi^{*1})^2 x^2 \partial^2_{xx}R_t(x)
+ 4\sigma^2_1(\pi^{*1})^2 x \partial_xR_t(x) + 2\sigma^2_1(\pi^{*1})^2 R_t(x)\right] dt\nonumber\\
&& - \left[ \sigma_1\pi^{*1} x \partial_x R_t(x) + \left(\lambda_1 + \sigma_1\pi^{*1}\right) R_t(x)\right] dW^1_t
- \lambda_2 R_t(x) dW^2_t,\nonumber
\end{eqnarray*}
With the ansatz $R_t(x)=R(t,x,-W^1_t,-W^2_t)$, the above becomes
$$
(\partial_t R + \frac{1}{2}\partial^2_{yy}R + \frac{1}{2}\partial^2_{zz}R) dt
- \partial_y R dW^1_t - \partial_z R dW^2_t
$$
$$
= \frac{1}{2}\left[ \sigma^2_1(\pi^{*1})^2 x^2 \partial^2_{xx}R
+ 4\sigma^2_1(\pi^{*1})^2 x \partial_xR + 2\sigma^2_1(\pi^{*1})^2 R\right] dt
$$
$$
- \left[ \sigma_1\pi^{*1} x \partial_x R + \left(\lambda_1 + \sigma_1\pi^{*1}\right) R\right] dW^1_t
- \lambda_2 R dW^2_t,
$$
which is equivalent to
$$
\partial_t R + \frac{1}{2}\partial^2_{yy}R + \frac{1}{2}\partial^2_{zz}R 
= \frac{1}{2} \sigma^2_1(\pi^{*1})^2 x^2 \partial^2_{xx}R
+ 2\sigma^2_1(\pi^{*1})^2 x \partial_xR + \sigma^2_1(\pi^{*1})^2 R,
$$
$$
\partial_y R = \sigma_1\pi^{*1} x \partial_x R + \left(\lambda_1 + \sigma_1\pi^{*1}\right) R,
\quad \partial_z R = \lambda_2 R.
$$
The following specification solves the above system:
$$
R(t,x,y,z) = \tilde{R}(t,\sigma_1\pi^{*1} y+\log x) e^{(\lambda_1 + \sigma_1\pi^{*1}) y + \lambda_2 z},
$$
$$
\partial_t \tilde{R} + A \partial_x \tilde{R} + (A+B) \tilde{R} = 0,
$$
$$
A:= \frac{1}{2}\left(2\lambda_1 \sigma_1\pi^{*1} - \sigma^2_1(\pi^{*1})^2 \right),
\quad B:= \frac{1}{2}\left( \lambda^2_1 + \lambda^2_2\right).
$$
A specific solution to the above equation is given by
$$
\tilde{R}(t,x) = \exp\left( -(B-\varepsilon A) t - (1+\varepsilon) x\right),
$$
$$
R(t,x,y,z) = \exp\left( -(B-\varepsilon A) t - (1+\varepsilon) \log x + \left( \lambda_1- \varepsilon \sigma_1\pi^{*1} \right) y + \lambda_2 z\right),
$$
with any $\varepsilon\in(0,1)$.
Then,
$$
R_t(x) = \frac{1}{x^{1+\varepsilon}} Q_t,
$$
where
$$
Q_t = \exp\left( -(B-\varepsilon A) t - \left( \lambda_1- \varepsilon \sigma_1\pi^{*1} \right) W^1_t - \lambda_2 W^2_t\right).
$$
Let us fix any $X^*_0>0$, and note that $\lambda$, $\sigma$, $U_0$, and $\pi^*$, satisfy the assumptions of Theorem \ref{thm:main}.
To complete the construction, we choose $\bx=1$ and 
$$
a^1_t(\bx) = - \frac{\lambda_1 - \varepsilon \sigma_1 \pi^{*1}}{\varepsilon(1-\varepsilon)} Q_t,
\quad a^2_t(\bx) = - \frac{\lambda_2}{\varepsilon(1-\varepsilon)} Q_t,
$$
to obtain
\begin{equation}\label{eq.U.def.prop}
U_t(x) = \zeta_t + \int_{1}^x \int_{y}^{\infty} R_t(z) dz dy = \zeta_t + Q_t \frac{1}{\varepsilon}\int_{1}^x y^{-\varepsilon} dy
= \zeta_t + Q_t\frac{1}{\varepsilon (1-\varepsilon)} (x^{1-\varepsilon}-1),
\end{equation}
with $\zeta_0=1$ and
\begin{eqnarray*}
&& d\zeta_t = -\frac{1}{2} \sigma^2_1(\pi^{*1})^2 Q_t dt - \frac{\lambda_1 - \varepsilon \sigma_1 \pi^{*1}}{\varepsilon(1-\varepsilon)} Q_t dW^1_t - \frac{\lambda_2}{\varepsilon(1-\varepsilon)} Q_t dW^2_t
= \frac{1}{\varepsilon(1-\varepsilon)} dQ_t.
\end{eqnarray*}
Then
$$
U_t(x) = Q_t\frac{1}{\varepsilon (1-\varepsilon)} x^{1-\varepsilon},\quad C^*(x) = u_0 \left(\frac{x}{X_0}\right)^{1-\varepsilon} Q_T.
$$
Notice that such choice of $\zeta$ ensures that $U_t(x)\geq0$, for all $x>0$ and all $(t,\omega)$, thus, satisfying the limited liability condition.
In addition, we can express $Q_t$ and, hence, $U_t(x)$, as deterministic functions of the returns of the two assets, $S^1$ and $S^2$, at time $t$:
$$
W^1_t = \frac{1}{\sigma_1} \log(S^1_t/S^1_0) - \lambda_1t + \frac{\sigma_1}{2}t,
$$
$$
W^2_t = \frac{1}{\sigma_2\sqrt{1-\rho^2}}\log(S^2_t/S^2_0)
- \frac{\rho}{\sigma_1\sqrt{1-\rho^2}} \log(S^1_t/S^1_0)
+ \left( \frac{\sigma_2}{2\sqrt{1-\rho^2}} - \frac{\sigma_1\rho}{2\sqrt{1-\rho^2}} -\lambda_2\right) t,
$$
$$
Q_t = \exp\left( \left( \frac{1}{2}(\lambda_1^2 + \lambda_2^2) - \lambda_1 \frac{\sigma_1}{2}
+ \varepsilon \pi^{*1} \frac{\sigma^2_1}{2} (1-\pi^{*1})
- \lambda_2 \frac{\sigma_2 - \sigma_1\rho}{2\sqrt{1-\rho^2}} \right) t \right)
$$
$$
\times \left(\frac{S^1_t}{S^1_0}\right)^{\varepsilon\pi^{*1} + \frac{\rho\lambda_2}{\sigma_1\sqrt{1-\rho^2}} - \frac{\lambda_1}{\sigma_1}}
\left(\frac{S^2_t}{S^2_0}\right)^{- \frac{\lambda_2}{\sigma_2\sqrt{1-\rho^2}}}
:=\widehat{Q}\left(t,S^2_t/S^2_0,S^3_t/S^3_0\right).
$$
To conclude that $C^*$ is an optimal contract, it remains to verify that the assumptions of the last statement of Theorem \ref{thm:main} are satisfied. Note that $U\geq0$. Part 1 of Theorem \ref{thm:main} implies that, for any $(\xi,\tau)\in\mX$ and any $\pi\in\mathcal{A}(\xi,\tau)$, the process $\left(U_t\left(X^{\pi,\xi,\tau}_t\right)\right)_{t\in[\tau,T]}$ is a local supermartingale. As it is nonnegative, and 
$$
U_{\tau}(\xi)=\text{const}\cdot Q_{\tau}\, \xi^{1-\varepsilon}\in L^1,
$$
(which follows form H\"older inequality), an application of Fatou's lemma yields that it is a true supermartingale.
Next, Part 2 of Theorem \ref{thm:main} implies that $\left(U_t\left(X^{*,\xi,\tau}_t\right)\right)_{t\in[\tau,T]}$ is a local martingale. As it is also positive, we have
$$
\EE \sup_{t\in[0,T]}\left| U_t\left(X^{*,\xi,\tau}_t\right) \right| \leq \text{const}\cdot \EE \left(\xi^{1-\varepsilon} \sup_{t\in[0,T]} \left(Q_t \left(X^{*,1,\tau}_t\right)^{1-\varepsilon}\right) \right)<\infty,
$$
which follows, again, from H\"older inequality, by observing that the expression inside the supremum is a geometric Brownian motion.
The above inequality implies that $\left(U_t\left(X^{*,\xi,\tau}_t\right)\right)_{t\in[\tau,T]}$ is a true martingale and completes the proof of the fact that $C^*$ is an optimal contract (by Theorem \ref{thm:main}).

Notice that the optimal contract $C^*$ is given by a power function of terminal wealth multiplied by a random scalar. This is in contrast to the individual objective of the principal, which is a deterministic function of terminal wealth. The random scalar, $Q_T$, itself, is a power function of the returns generated by the two assets available in the market. Thus, effectively, the optimal contract measures the terminal wealth generated by the agent relative to the performance of the available assets. Note also that the exponents in the latter power functions depend on the characteristics of the assets, such as the market price of risk.
Recall also that the optimal contract is nonnegative, thus, satisfying the limited liability condition. 

Note also that, as $\varepsilon\approx 0$, the optimal contract converges to $u_0$ multiplied by
$$
\frac{x/X_0}{\widehat{Q}\left(T,S^2_T/S^2_0,S^3_T/S^3_0\right)}.
$$ 
The above ratio measures the return of the fund relative to the returns of the two assets, the latter being captured by $\widehat{Q}$. If this ratio exceeds one (i.e. if the fund outperformance the benchmark), the manager's fee exceeds its initially expected value $u_0$ (i.e. he receives a bonus). Otherwise, his payment drops below $u_0$ (i.e. he is penalized).

Finally, it is worth mentioning that the optimal contract $C^*$ is a deterministic function of the terminal values of the wealth process and of the tradable assets. Hence, it also provides a solution to the second-best (moral hazard) version of the problem, in which the principal only observes $(X^{\pi}_T,S^1_T,S^2_T)$ and, hence, cannot fully deduce the agent's strategy $\pi$ from her observations. This is not surprising, however, since the terminal value of the target optimal wealth process, $X^*_T$, is a deterministic function of $(S^1_T,S^2_T)$, which means that observing the latter values is sufficient for the principal to enforce the desired trading strategy.

\bibliographystyle{plain}
\bibliography{OptContractForward_refs}

\end{document}